\newcommand{\trace}[1]{\mbox{$\mathrm{Tr}$}(#1)}
\newcommand{\hil}[1]{\mbox{$\mathcal{#1}$}}
\newcommand{\ket}[1]{| #1 \rangle}
\newcommand{\bra}[1]{\langle #1 |}
\newtheorem*{infolossthm}{The Information Loss Theorem}
\newtheorem*{corollary}{Corollary}
\begin{document}

\date{}
\title{\textbf{Two Dogmas About Quantum Mechanics}}
\author{\textbf{Jeffrey Bub} \\
{\footnotesize Department of Philosophy, University of Maryland, College
Park, MD 20742\thanks{\textit{E-mail address:} jbub@umd.edu} }\\
\textbf{Itamar Pitowsky}\\
{\footnotesize Department of Philosophy, The Hebrew University, Jerusalem,
Israel\thanks{\textit{E-mail address:} itamarp@vms.huji.ac.il}}}
\maketitle

\begin{abstract}
We argue that the intractable part of the measurement problem---the `big' measurement problem---is a pseudo-problem that depends for its legitimacy on the acceptance of two dogmas. The first dogma is John Bell's assertion that measurement should never be introduced as a primitive process in a fundamental mechanical theory like classical or quantum mechanics, but should always be open to a complete analysis, in principle, of how the individual outcomes come about dynamically. The second dogma is the view that the quantum state has an ontological significance analogous to the significance of the classical state as the `truthmaker' for propositions about  the occurrence and non-occurrence of events, i.e., that the quantum state is a  representation of physical reality. We show how both dogmas can be rejected in a realist information-theoretic interpretation of quantum mechanics as an alternative to the Everett interpretation. The Everettian, too, regards the `big' measurement problem as a pseudo-problem, because the Everettian rejects the assumption that measurements have definite outcomes, in the sense that one particular outcome, as opposed to other possible outcomes, actually occurs in a quantum measurement process. By contrast with the Everettians, we accept that measurements have definite outcomes. By contrast with the Bohmians and the GRW `collapse' theorists who add structure to the theory and propose dynamical solutions to the `big' measurement problem, we take the problem to arise from the failure to see the significance of Hilbert space as a new \emph{kinematic} framework for the physics of an indeterministic universe, in the sense that Hilbert space imposes kinematic (i.e., pre-dynamic) objective probabilistic constraints on correlations between events. 
\end{abstract}

\section{Oxford Everett}

The salient difference between classical and quantum mechanics is the noncommutativity of the operators representing the physical magnitudes (`observables') of a quantum mechanical system---or, equivalently, the transition from a classical event space, represented by the Boolean algebra of (Borel) subsets of a phase space, to a non-Boolean quantum event space represented by the projective geometry of closed subspaces of a Hilbert space, which form an infinite collection of intertwined Boolean algebras, each Boolean algebra corresponding to a resolution of the identity: a partition of the Hilbert space representing a family of mutually exclusive and collectively exhaustive events.

Probabilities in quantum mechanics are, as von Neumann put it \cite[p. 245]{Neumann1954}, `uniquely given from the start' as a nonclassical relation
between events represented by the angles between the 1-dimensional subspaces representing atomic (elementary) events in the projective geometry of
subspaces of Hilbert space. If $e$ and $f$ are atomic events, the  `transition probability' (Born probability) between the events is: 
\begin{equation}
\mbox{prob}(e,f)  = |\langle e|f\rangle|^{2} =  |\langle f|e\rangle|^{2} = \cos^{2}\theta_{ef}
\end{equation}
The transition probability 
can be expressed as:
\begin{equation}
\mbox{prob}_{e}(f) = \trace{P_{e}P_{f}}
\end{equation}
where $P_{e}$ and  $P_{f}$  are the projection operators onto the 1-dimensional subspaces representing the events $e$ and $f$, respectively. Uniqueness is shown by Gleason's theorem \cite{Gleason}:\footnote{For von Neumann, uniqueness is a consequence of invariance under the unitary symmetries of the projective lattice representing events.} in a Hilbert space $\hil{H}$ of dimension greater than 2, if $\sum_{i}\mbox{prob}(f_{i}) = 1$ for the atomic events $f_{i}$ in each Boolean algebra generated by a partition of the Hilbert space into orthogonal 1-dimensional subspaces, then the probabilities of events $f$ represented by subspaces of $\hil{H}$ are uniquely represented as:
\begin{equation}
\mbox{prob}_{\rho}(f) = \trace{\rho P_{f}}
\end{equation}
where $P_{f}$ is the projection operator onto the subspace representing the event $f$ and $\rho$ is a density operator representing a pure state ($\rho = P_{e}$, for some atomic event $e$) or a mixed state ($\rho = \sum_{i}w_{i}P_{e_{i}}$).  

It is assumed that
the assignment of probabilities satisfies a condition that Barnum et al \cite
{BCFFS2000} call `the noncontextuality of probability,' that the probability
assigned to an event $f$ depends only on $f$ and is independent of the Boolean algebra to which the event belongs. 
Note that if `$f$
 in context 1' and `$f$ in context 2'
represented two distinct events, we could not represent the structure of quantum events as the projective geometry of subspaces of a Hilbert space: we would have to enlarge the structure. 

The question is: what do these `transition probabilities' or `transition weights' mean? The probabilities are probabilities of---\emph{what}? Evidently, $|\langle e|f\rangle|^{2}$ does not represent the probability of a spontaneous transition from an event $e$ to the event $f$. The textbook answer is that $|\langle e|f\rangle|^{2}$ represents the probability, for a system in the state $\ket{e}$  in which the event $e$ has probability 1, of finding the event $f$ in a measurement of an observable of the system, where the set of possible outcomes of the measurement generates a Boolean algebra, representing a partition of the Hilbert space containing the event $f$ (but note, not the event $e$). 

The textbook answer by itself, without adding anything more to the story of how these events are supposed to come about in a measurement process, is adequate only if we are content with an instrumentalist interpretation of the theory. Why? The structure of the quantum event space determines the kinematic part of quantum theory. This includes the association of  Hermitian
operators with observables, the Born probabilities, the von Neumann-L\"{u}ders 
conditionalization rule, and the unitarity constraint on the dynamics, which is related to the event structure via a theorem of Wigner \cite{Wigner1959},\cite{Uhlhorn1963}. The transition from the state 
$\ket{e}$, in which the event $e$ has probability 1, to the state $\ket{f}$, in which the event $f$ has probability 1, with probability $|\langle e|f\rangle|^{2}$ in a measurement process is a non-unitary stochastic transition that is not described by the unitary dynamics. Since the probability of the event $e$ was 1 before the measurement and is now, in the state $\ket{f}$  after the occurrence of the measurement outcome $f$, less than 1, there is a loss of information on measurement or---as Bohr put it---an `irreducible and uncontrollable' measurement disturbance. Without a dynamical explanation of this measurement disturbance, or an analysis of what is involved in a quantum measurement process that addresses the issue (including, possibly, rejecting the `eigenvalue-eigenstate rule'---the association of the outcome event $f$ with the state $\ket{f}$---as in Bohm's theory or modal interpretations), the theory qualifies  as an algorithm for predicting the probabilities  of measurement outcomes, but cannot be regarded as providing a realist account, in principle, of how events come about in a measurement process. 

This is the measurement problem. Proposed solutions to the problem, such as Bohm's `hidden variable' theory \cite{GoldsteinSEP} or the GRW `dynamical collapse' theory \cite{GhirardiSEP}, add structure to the theory: particle trajectories in the case of Bohm's theory or a non-unitary stochastic dynamics for the `primitive ontology' of the GRW theory: mass density in the GRWm version, or `flashes' in the GRWf version (see, e.g., /cite{Allori2007}). The Everett interpretation purports to solve the problem without adding any new structural elements to quantum mechanics.       

The central claims of the Everett
interpretation in the `Oxford' version developed by Deutsch \cite{Deutsch1999},
Saunders \cite{Saunders1995,Saunders1998,Saunders2004}, Wallace \cite
{Wallace2003a,Wallace2003b,Wallace2003c,Wallace2005}, Greaves and others \cite
{Greaves2004,Greaves2007a,Greaves2007b} can be outlined as follows:

\begin{description}
\item[Ontology] At the most fundamental level, what there \emph{is} is
described by the quantum state of the universe---so whatever is true or
false is determined by the quantum state as the `truthmaker' for propositions about  the occurrence and non-occurrence of events.
\item[Branching] A family of effectively non-interfering or decoherent histories of coarse-grained events associated with relatively stable systems at the macrolevel emerges through the dynamical process of decoherence, as a consequence of the Hamiltonian that characterizes the dynamical evolution of the universal quantum state. With respect to the coarse-grained basis selected by decoherence, the quantum state decomposes into a linear superposition that can be interpreted as describing an emergent branching structure of non-interfering quasi-classical histories or `worlds,'  identified with the familiar classical macroworlds of our experience, weighted by the Born probabilities. The alternative outcomes of a quantum measurement process are associated with different branches in the decomposition of the quantum state with respect to the decoherence basis. There is no fact of the matter as to the number of branches: the history space is a quasi-classical probability space that is inherently vaguely defined (appropriately so, given the vague specification of macro-configurations). The coarse-graining of the event space can be refined or coarsened to a certain extent without compromising effective decoherence, and the decoherence basis can be unitarily transformed (e.g., rotated) over a certain range of transformations without compromising decoherence. 
\item[Uncertainty/Caring] There is a sense in which a rational agent on a branch,
faced with subsequent branching, can be uncertain about the future (i.e.,
uncertain about `which branch the agent will subsequently occupy'). Such an
agent can have rational credences (degrees of belief that satisfy the axioms
of probability theory) about the outcomes of quantum measurements, even
though all outcomes occur on different branches. Alternatively, even without uncertainty, an agent faced with multiple futures will care about what happens on a branch, and so will have a `caring measure' for decision-making that quantifies the extent of caring for different branches and satisfies the axioms of probability theory.
\item[Probability] To achieve a realist interpretation of quantum mechanics
that solves the measurement problem, it suffices to postulate that an agent's credence function or caring measure conforms to the objective quantum mechanical weights of the
different branches. In fact, it is possible to prove that this must be so,
given standard rationality constraints on an agent's preferences, and a measurement neutrality 
assumption: that a rational agent is indifferent between two quantum wagers that agree on the quantum state, the observable measured, and the payoff function on the outcomes, i.e., the agent is indifferent between alternative measurement procedures; alternatively, the result follows from a related equivalence assumption: that a rational agent assigns equal credences to events that are assigned equal quantum weights. These additional assumptions can be justified as 
rationality constraints, but only on the Everett interpretation, in which
all possible measurement outcomes occur, relative to different branches.
\end{description}

The Everettian aims to show that standard quantum mechanics can be
understood as a complete theory in a realist sense---that the measurement problem does not
reduce the theory to an instrument for the probabilistic prediction of
measurement outcomes. The basic problem for the Everettian is to `save the
appearances,' given the radical difference between our experience of a
stable macroworld and the ontological assumption. The dynamics of decoherence yields
an emergent weighted branching structure of quasi-classical histories at the macrolevel. So what has to be explained is
how uncertainty or caring makes sense when all alternatives occur relative to different branches, and how the quantum weights---which are a
feature of the quantum state, i.e., the ontology---are associated with the
credence function or caring measure of rational agents. The measurement problem is the problem of
explaining the apparently `irreducible and uncontrollable disturbance' in a
quantum measurement process, the `collapse' of the wave function described
by von Neumann's projection postulate. The Everettian's solution is to show
how appearances can be saved by denying that there is any such disturbance,
on the basis that no definite outcome is selected in a measurement---all
outcomes are selected relative to different branches, according to the quantum
theory. The appearance of disturbance on a single branch is a reflection of
how the quantum weights are distributed in the emergent process of
branching, and if we either assume or prove that our credence function or caring measure should
conform to these weights, then we have an explanation for the appearance of
disturbance in a realist interpretation of quantum mechanics as a complete
dynamical theory.

Of course, everything hinges on whether the different components of the
intepretation can be established satisfactorily, and there is now an
extensive literature challenging and defending these claims, especially 
\textbf{Uncertainty/Caring} and \textbf{Probability}. Here we simply list these
components\footnote{For a critique of \textbf{Probability} by one of us, see Hemmo and Pitowsky \cite
{HemmoPitowsky2007}.} and note that the claim is that the Everett
interpretation solves the measurement problem on the basis of (i) the
weighted branching structure of quasi-classical histories that
emerges through the dynamical process of decoherence, (ii) an argument that rational agents can be
uncertain or care differently about different  futures in a branching universe, and (iii) the proposal that the credence function or caring measure
of rational agents should conform to the weights of the branches. For the
Everettian, the icing on the cake is that the interpretation yields a
derivation of Lewis's Principal Principle: the identification of an
objective feature of the world---the quantum weights---with the credence function or caring measure of
rational agents, and hence the interpretation of the quantum weights as
objective chances. But the cake itself, so to speak, is independent of
this additional feature. (See Wallace \cite{Wallace2005}.)
 
In a previous publication \cite{Pitowsky07}, one of us characterized debates about  the foundations of quantum mechanics in terms of two assumptions or dogmas, and distinguished two measurement problems: a `big' measurement problem and a `small' measurement problem. The first dogma is Bell's assertion (defended in \cite{Bellmeas}) that measurement should never be introduced as a primitive process in a fundamental mechanical theory like classical or quantum mechanics, but should always be open to a complete analysis, in principle, of how the individual outcomes come about dynamically. The second dogma is the view that the quantum state has an ontological significance analogous to the ontological significance of the classical state as the `truthmaker' for propositions about the occurrence and non-occurrence of events, i.e., that the quantum state is a  representation of physical reality. The `big' measurement problem is the problem of explaining how measurements can have definite outcomes, given the unitary dynamics of the theory: it is the problem of explaining \emph{how individual measurement outcomes come about dynamically.} The `small' measurement problem is the problem of accounting for our familiar experience of a classical or Boolean macroworld, given the non-Boolean character of the underlying quantum event space: it is the problem of explaining  the \emph{dynamical emergence of an effectively classical  probability space of macroscopic measurement outcomes} in a quantum measurement process. 

The `big' measurement problem depends for its legitimacy on the acceptance of the two dogmas. We argue below that both dogmas should be rejected, and that the `big' measurement problem is a pseudo-problem. In a sense, the Everettian, too, regards the `big' measurement problem as a pseudo-problem, because the Everettian rejects the assumption that measurements have definite outcomes, in the sense that one particular outcome, as opposed to other possible outcomes, actually occurs in a quantum measurement process. By contrast with the Everettians, we accept that measurements have definite outcomes. By contrast with the Bohmians and the GRW `collapse' theorists who add structure to the theory and propose dynamical solutions to the `big' measurement problem, we take the problem to arise from the failure to see the significance of Hilbert space as a new \emph{kinematic} framework for the physics of an indeterministic universe, in the sense that Hilbert space imposes kinematic (i.e., pre-dynamic) objective probabilistic constraints on correlations between events. By `predynamic' here, we refer to generic features of quantum systems, independent of the details of the dynamics (see Jannsen \cite{Jannsen2007a} for a similar kinematic-dynamic distinction in the context of special relativity). The `small' measurement problem is resolved by considering the dynamics of the measurement process and the role of decoherence in the emergence of an effectively classical probability space of macro-events to which the Born probabilities refer (alternatively, by considering certain combinatorial features of the probabilistic structure: see Pitowsky \cite[\S4.3]{Pitowsky07}).

In the following section, we list the essential features of the proposed information-theoretic interpretation, somewhat more extensively than our brief sketch of the Everett interpretation. Further discussion follows in a subsequent Commentary.

\section{An Information-Theoretic Interpretation of Quantum Mechanics}

The elements of the
information-theoretic interpretation we propose\footnote{For related views, see Demopoulos \cite{Demopoulos2008}, Pitowsky \cite{PitowskyBetting,Pitowsky07}.} can be set out as
follows:

\begin{description}
\item[`No Cloning'] The empirical discovery underlying the transition from classical to quantum mechanics is the discovery that chance set-ups behave differently than we thought they did. More precisely: there are information sources that cannot be broadcast---there is no universal cloning machine capable of copying the outputs of an arbitrary information source.
\item[Kinematics] Hilbert space as a projective geometry
(i.e., the subspace structure of Hilbert space) represents a non-Boolean event space, in which there are built-in, structural probabilistic constraints on correlations between events (associated with the angles between events)---just as in special
relativity the geometry of Minkowski space-time represents spatio-temporal
constraints on events. Certain principles characterizing physical processes motivate the choice of Hilbert space as the representation space for the correlational structure of events, just as Einstein's
principle of special relativity and the light postulate motivate the choice of Minkowski space-time as the representation space for the spatio-temporal structure of events. In the case of
quantum mechanics, these principles are information-theoretic and include a `no signaling' principle and a `no cloning' principle. The structure of Hilbert space imposes kinematic (i.e., pre-dynamic) objective probabilistic constraints
on events to which a quantum dynamics of matter and fields is required to conform, through its symmetries, just
as the structure of  Minkowski space-time imposes kinematic constraints on events to which a
relativistic dynamics is required to conform. In this sense \emph{Hilbert space provides the kinematic framework for the physics of an indeterministic universe}, just as Minkowski space-time provides the kinematic framework for the physics of  a non-Newtonian, relativistic universe. There is no deeper explanation for the quantum phenomena of interference and entanglement than that provided by the structure of Hilbert space, just as there is no deeper explanation for the relativistic phenomena of Lorentz contraction and time dilation than that provided by the structure of Minkowski space-time.
\item[Dynamics] The unitary quantum dynamics evolves the whole structure of events with probabilistic correlations in Hilbert space (in the Heisenberg picture), not the evolution from one configuration of the universe to
another, i.e., not the evolution from one actual co-occurrence of events to a subsequent actual co-occurrence of events. This means that there can be a real change in the correlations between events at the microlevel without a change in the occurrence of events at the macrolevel (as in the evolution of a quantum system through the unitary gates of a quantum computer, prior to the final measurement). 
\item[Probability] By Gleason's theorem, there is a unique assignment of credences conforming to the structural probabilistic constraints (the objective chances) of Hilbert space (see Pitowsky \cite{PitowskyBetting}). These credences are encoded in the quantum state. So the quantum state is a credence function. 
\item[Information Loss] The salient
principle marking the transition from classical to nonclassical theories 
of information is the `no cloning' principle: there is no universal cloning machine capable of copying the outputs of an arbitrary information source.\footnote{More precisely, there is no universal broadcasting machine. See below.} This principle entails a loss of information in a measurement
process---an `irreducible and uncontrollable disturbance'---\emph{irrespective of
how the measurement process is implemented dynamically}. The loss of
information is to be understood, ultimately, as a kinematic  effect of the nonclassical quantum event space, just
as Lorentz contraction is, ultimately, a kinematic effect in special
relativity. 
\item[Completeness] Conditionalizing on a measurement outcome leads to a
nonclassical updating of the credence function represented by the quantum state via the von Neumann-L\"{u}ders rule, which expresses the information loss on measurement. This updating is consistent with a dynamical account of
the correlations between micro and macro-events in a quantum measurement process. The Hamiltonians characterizing the interactions between microsystems and macrosystems, and the interactions between macrosystems and their environment, are such that certain relatively stable structures of events associated with the familiar macrosystems of our experience emerge at the macrolevel, forming an effectively classical probability space. This amounts to a consistency proof that, say, a Stern-Gerlach spin-measuring device or a bubble chamber behaves dynamically according to the kinematic constraints represented by the projective geometry of Hilbert space, as these constraints manifest themselves at the macrolevel. Such a consistency proof demonstrates the completeness of quantum
mechanics. Given the `no cloning' principle underlying the kinematics of Hilbert space, there is no further story to be told about how individual measurement outcomes come about dynamically (assuming we don't add structure to the theory, such as Bohmian trajectories or dynamical `collapses'). Similarly, the dynamical explanation of relativistic phenomena like Lorentz contraction in terms of forces, insofar as the forces are required to be Lorentz covariant, amounts to a consistency proof. There is no further story to be told about Lorentz contraction, once it is shown how to
provide a dynamical account   consistent with the kinematic constraints of
Minkowski geometry (assuming we don't add structure to the theory, such as the ether). 
\item[Realism] The possibility of a dynamical analysis of measurement processes consistent with the
Hilbert space kinematic constraints justifies the information-theoretic
interpretation of quantum mechanics as realist and not merely a predictive
instrument for updating probabilities on measurement outcomes.
\end{description}

\section{Commentary}

On the information-theoretic interpretation, the quantum state is a credence function, a bookkeeping device for keeping track of probabilities---the
universe's objective chances---not the quantum analogue of the dynamically evolving
classical state understood as the `truthmaker' for propositions about  the occurrence and non-occurrence of events.

Conditionalization on the occurrence of an event $a$, in the sense of a
minimal revision---consistent with the subspace structure of Hilbert
space---of the probabilistic information encoded in a quantum state given by a density operator $\rho$, is given by the von Neumann-L\"{u}ders rule:\footnote{
See \cite{BubProjPost} for a discussion.} 
\begin{equation}
\rho \rightarrow \rho_{a} \equiv \frac{P_{a}\rho P_{a}}{\mbox{$\mathrm{Tr}$}
(P_{a} \rho P_{a})}  \label{eqn:luders}
\end{equation}
where $P_{a}$ is the projection operator onto the subspace representing the
event $a$. That is, $\rho_{a}$ is the conditionalized density operator,
conditional on the event $a$, and the normalizing factor $
\mbox{$\mathrm{Tr}$}(P_{a} \rho P_{a}) = \mbox{$\mathrm{Tr}$}(\rho P_{a})$
is the probability assigned to the event $a$ by the state $\rho$.

If we consider a pair of correlated systems, A and B, then
conditionalization on an A-event, for the probabilistic information encoded
in the density operator $\rho_{B}$ representing the probabilities of events
at the remote system B, will always be an updating, in the sense of a
refinement.

For example, suppose the system A is associated with a 3-dimensional Hilbert
space $\mbox{$\mathcal{H}$}_{A}$ and the system B is associated with a
2-dimensional Hilbert space $\mbox{$\mathcal{H}$}_{B}$. Suppose the
composite system AB is in an entangled state: 
\begin{eqnarray}
| \psi^{AB} \rangle & = & \frac{1}{\sqrt{3}}(| a_{1} \rangle| b_{1} \rangle
+ | a_{2} \rangle| c \rangle + | a_{3} \rangle| d \rangle)  \notag \\
& = & \frac{1}{\sqrt{3}}(| a^{\prime}_{1} \rangle| b_{2} \rangle + |
a^{\prime}_{2} \rangle| e \rangle + | a^{\prime}_{3} \rangle| f \rangle)
\end{eqnarray}
where $| a_{1} \rangle,| a_{2} \rangle,| a_{3} \rangle$ and $|
a^{\prime}_{1} \rangle,| a^{\prime}_{2} \rangle,| a^{\prime}_{3} \rangle$
are two orthonormal bases in $\mbox{$\mathcal{H}$}_{A}$ and $| b_{1}
\rangle, | b_{2} \rangle$ is an orthonormal basis in $\mbox{$\mathcal{H}$}
_{B}$. The triple $| b_{1} \rangle,| c \rangle,| d \rangle$ and the triple $
| b_{2} \rangle,| e \rangle,| f \rangle$ are nonorthogonal triples of
vectors in $\mbox{$\mathcal{H}$}_{B}$.\footnote{
The vectors in each triple are separated by an angle $2\pi/3$. For a precise
specification of these vectors, see Bub \cite{Bub2007}.}The state of B (obtained
by tracing over $\mbox{$\mathcal{H}$}_{A}$) is the completely mixed state $
\rho_{B} = \frac{1}{2}I_{B}$: 
\begin{equation}
\frac{1}{3}| b_{1} \rangle\langle b_{1} | + \frac{1}{3}| c \rangle\langle c
| + \frac{1}{3}| d \rangle\langle d | = \frac{1}{3}| b_{2} \rangle\langle
b_{2} | + \frac{1}{3}| e \rangle\langle e | + \frac{1}{3}| f \rangle\langle
f | = \frac{I_{B}}{2}
\end{equation}

Conditionalizing on one of the eigenvalues $a_{1},a_{2},a_{3}$ or $
a_{1}^{\prime },a_{2}^{\prime },a_{3}^{\prime }$ of an A-observable $A$ or $
A^{\prime }$ via (\ref{eqn:luders}), i.e., on the occurrence of an event
corresponding to $A$ taking the value $a_{i}$ or $A^{\prime }$ taking the
value $a_{i}^{\prime }$ for some $i$, changes the density operator $\rho _{B}
$ of the remote system B to one of the states $|b_{1}\rangle ,|c\rangle
,|d\rangle $ or to one of the states $|b_{2}\rangle ,|e\rangle ,|f\rangle $.
Since the mixed state $\rho _{B}=\frac{1}{2}I_{B}$ can be decomposed as an
equal weight mixture of $|b_{1}\rangle ,|c\rangle ,|d\rangle $ and as an
equal weight mixture of $|b_{2}\rangle ,|e\rangle ,|f\rangle $, the change
in the state of B is an updating, in the sense of a refinement of the
information about B encoded in the state $|\psi ^{AB}\rangle $, taking into
account the new information $a_{i}$ or $a_{i}^{\prime }$. In fact, the mixed
state $\rho _{B}=\frac{1}{2}I_{B}$ corresponds to an infinite variety of
mixtures of pure states in $\mbox{$\mathcal{H}$}_{B}$ (not necessarily equal
weight mixtures, of course). The effect at the remote system B of
conditionalization on any event at A will always be an updating, in the
sense of a refinement, with respect to one the these mixtures.\footnote{
Fuchs makes a similar point in \cite{FuchsInfo4}.} This is the content of
the Hughston-Jozsa-Wootters theorem \cite{HJW}. It is what Schr\"{o}dinger
called `remote steering' and is the basis of quantum teleportation, quantum
dense coding, and other peculiarities of quantum information, including the
impossibility of unconditionally secure bit commitment (see Bub \cite{BubQIC}
for a discussion).

The effect of conditionalization at a remote system (the system that is not
directly involved in the conditionalizing event) is then consistent with a
`no signaling' principle: 
\begin{eqnarray}
\sum_{b}p(ab|AB) \equiv p(a|AB) & = & p(a|A) \\
\sum_{a}p(ab|AB) \equiv p(b|AB) & = & p(b|B) \ 
\end{eqnarray}
where $a$ represents a value of $A$ and $b$ represents a value of $B$. If
conditionalization on the value of an A-observable changed the probabilities
at a remote system B in a way that could \emph{not} be represented as an
updating in the sense of a refinement of the prior information about B
expressed in terms of correlations between A-observables and B-observables
(as encoded in the entangled state $| \psi^{AB} \rangle$), then
conditionalization would allow instantaneous signaling between A and B. The
occurrence of a particular sort of event at A---corresponding to a determinate value
for the observable $A$ as opposed to a determinate value for some other observable $A^{\prime}$---would produce a detectable change in the B-probabilities, and so Alice at
A could signal instantaneously to Bob at B merely by performing an $A$
-measurement and gaining a specific sort of information about A (the value
of $A$ or the value of $A^{\prime}$).

The `no signaling' principle is a special case of what Barnum et al \cite
{BCFFS2000} call `the noncontextuality of probability,' which can be
expressed as a condition on the probabilities assigned to the eigenvalues of
any two commuting observables $[X,Y] = 0$: 
\begin{eqnarray}
\sum_{y}p(xy|XY) \equiv p(x|XY) & = & p(x|X) \\
\sum_{x}p(xy|XY) \equiv p(y|XY) & = & p(y|Y) \ 
\end{eqnarray}
This formulation of the noncontextuality of probability follows from the
representation of an observable in terms of its spectral measure.\footnote{
Barnum et al formulate noncontextuality as the requirement that the
probability assigned to an event $e$ depends only on $e$ and is independent
of the other events in each mutually exclusive and collectively exhaustive
set of events $\{e_{i}\}$ containing $e$, i.e., that the probability of an
event is independent of the Boolean subalgebra to which the event belongs.}
We obtain the `no signaling' condition if we take $X = A\otimes I$ and $Y =
I \otimes B$. Note that `no signaling' is not specifically a relativistic
constraint on superluminal signaling. It is simply a condition imposed on
the marginal probabilities of events for separated systems, requiring that
the marginal probability of a B-event is independent of the particular set
of mutually exclusive and collectively exhaustive events selected at A, and
conversely, and this might well be considered partly constitutive of what
one means by separated systems.

To preserve the `no signaling' principle, quantum probabilities must also
satisfy a `no cloning' principle: there can be no universal cloning machine,
i.e., it is impossible to construct a cloning machine that will clone the
output of an arbitrary information source. More precisely, there can be no
universal broadcasting machine---no device that takes a probability
distribution over an event space to a new probability distribution over a
product space of events, where the marginal probability distributions over
each factor space is the same as the original distribution. We will continue
to use the term `cloning' rather than `broadcasting' because it is more
intuitive and more familiar, but note that we have in mind copying the \emph{outputs} of an information source, not the information source itself
(defined by the probability distribution).

Suppose a universal cloning machine were possible. Then such a device could
copy any state in the orthogonal triple $| b_{1} \rangle, | c \rangle, | d
\rangle$ as well as any state in the orthogonal triple $| b_{2} \rangle, | e
\rangle, | f \rangle$. It would then be possible for Alice at A to signal to
Bob at B. If Alice obtains the information given by an eigenvalue $a_{i}$ of 
$A$ or $a^{\prime}_{i}$ of $A^{\prime}$, and Bob inputs the system B into
the cloning device $n$ times, he will obtain one of the states $| b_{1}
\rangle^{\otimes n}, | c \rangle^{\otimes n}, | d \rangle^{\otimes n}$ or
one of the states $| b_{2} \rangle^{\otimes n}, | e \rangle^{\otimes n}, | f
\rangle^{\otimes n}$, depending on the nature of Alice's information. Since
these states tend to mutual orthogonality in $\otimes^{n}\mbox{$%
\mathcal{H_{B}}$}$ as $n \rightarrow \infty$, they are distinguishable in
the limit. So, even for finite $n$, Bob would in principle be able to obtain
some information instantaneously about a remote event.

More fundamentally, the existence of a universal cloning machine is
inconsistent with the interpretation of Hilbert space as providing the kinematic framework for an indeterministic physics, in which probabilities (objective chances)  are `uniquely
given from the start' by the geometry of Hilbert space. For such a device
would be able to distinguish the equivalent mixtures of nonorthogonal states
represented by the same density operator $\rho_{B} = \frac{1}{2}I_{B}$. If a
quantum state prepared as an equal weight mixture of the states $| b_{1}
\rangle, | c \rangle, | d \rangle$ could be distinguished from a state
prepared as an equal weight mixture of the states $| b_{2} \rangle, | e
\rangle, | f \rangle$, the representation of quantum states by density
operators would be incomplete.

Now consider the effect of conditionalization on the state of A. The state
of AB can be expressed as the biorthogonal (Schmidt) decomposition: 
\begin{equation}
| \psi^{AB} \rangle = \frac{1}{\sqrt{2}} (| g \rangle| b_{1} \rangle + | h
\rangle| b_{2} \rangle)
\end{equation}
where 
\begin{eqnarray}
| g \rangle & = & \frac{2| a_{1} \rangle - | a_{2} \rangle -| a_{3} \rangle}{
\sqrt{6}} \\
| h \rangle & = & \frac{| a_{2} \rangle - | a_{3} \rangle}{\sqrt{2}}
\end{eqnarray}
The density operator $\rho_{A}$, obtained by tracing $| \psi^{AB} \rangle$
over B, is: 
\begin{equation}
\rho_{A} = \frac{1}{2}| g \rangle\langle g | + \frac{1}{2}| h \rangle\langle
h |
\end{equation}
which has support on a 2-dimensional subspace in the 3-dimensional Hilbert
space $\mbox{$\mathcal{H}$}_{A}$: the plane spanned by $| g \rangle$ and $|
h \rangle$ (in fact, $\rho_{A} = \frac{1}{2}P_{A}$, where $P_{A}$ is the
projection operator onto the plane). Conditionalizing on a value of $A$ or $
A^{\prime}$ yields a state that has a component outside this plane. So the
state change on conditionalization cannot be interpreted as an updating of
information in the sense of a refinement, i.e., as the selection of a
particular alternative among a set of mutually exclusive and collectively
exhaustive alternatives represented by the state $\rho_{A}$.

This is the notorious `irreducible and uncontrollable disturbance' arising
in the registration of new information about the occurrence of an event that
underlies the measurement problem: the loss of some of the information
encoded in the original state (in the above example, the probability of the
A-event represented by the projection operator onto the 2-dimensional
subspace $P_{A}$ is no longer 1, after the registration of the new
information about the observable $A$ or $A^{\prime}$). If the registration
of new information is the outcome of a measurement then, since the state
change on measurement will have to be stochastic and non-unitary, it cannot
be described by the deterministic dynamics of the theory, which must be
unitary (for closed systems) for consistency with the Hilbert space
representation of probabilities. A solution to the problem is generally
understood to require amending the theory in such a way that the loss of
information can be accounted for dynamically, and the quantum probabilities
can be reconstructed dynamically as measurement probabilities. Then the
quantum probabilities are not `uniquely given from the start' as kinematic features of
an appropriately represented event structure, i.e., they do not arise
kinematically but are derived dynamically, as artifacts of the measurement
process or of decoherence. Even on the Everett interpretation, where Hilbert
space is interpreted as the representation space for a new sort of
ontological entity, represented by the quantum state, and no definite
outcome out of a range of alternative outcomes is selected in a quantum measurement
process (so no explanation is required for such an event), probabilities
arise as a feature of the branching structure that emerges in the dynamical
process of decoherence.

From the perspective of the information-theoretic interpretation, the
`disturbance' involved in conditionalization is a kinematic phenomenon associated with the non-Boolean quantum event space. If there were no information loss in
the conditionalization of quantum probabilities, then cloning would be
possible, and equivalent mixtures associated with the same density operator
would be distinguishable, in which case Hilbert space would not be an
appropriate representation space for quantum events and their probabilistic
correlations.\footnote{
For the Everettian, there is the appearance of measurement disturbance on
each branch, or rather, on `most' branches, because there will always be
some branches on which it appears that there is no measurement
disturbance---and on these branches it will appear that cloning is possible.}
In the Appendix, we show that this follows directly from the `no cloning'
principle for a large class of theories. We prove that in this
class of theories the `no cloning' principle demarcates the boundary
between classical theories and theories in which measurement involves an `irreducible and
uncontrollable disturbance'. It seems plausible, therefore, that this
principle should play a central role in a derivation of the Hilbert space
structure from information theory.

It is instructive here to recall Einstein's distinction between `principle'
theories, like the special theory of relativity, formulated in terms of the
relativity principle and the light postulate (empirical regularities raised
to the level of postulates), and `constructive' theories, like Lorentz's
theory, formulated in terms of a rich ontology of objects like
particles, fields, and the ether. Einstein compared thermodynamics as a
principle theory (`no perpetual motion machines of the first and and second
kind') to the kinetic theory of gases as a constructive theory (where the
mechanical and thermal behavior of a gas is reduced to the motion of
molecules, modeled as little billiard balls). He proposed special relativity
as a kinematic replacement for Lorentz's dynamical interpretation of what we now refer to as Lorentz
covariance, which he saw as unsatisfactory, not as a rival theory of matter
and radiation. One might say that what eventually replaced Lorentz's theory
was relativistic quantum theory. From this perspective, Minkowski space-time
is the constructive theory corresponding to Einstein's principle theory
formulation of special relativity: it is a component of the kinematic part
of the constructive theory of the constitution of matter provided by
relativistic quantum theory. (See Janssen \cite[331--332]
{JanssenBalashov2003} for an account along these lines.)

In an article entitled `How to Teach Special Relativity' \cite
{BellRelativity}, John Bell considers the following puzzle: Three identical
spaceships, $A, B$, and $C$, are at rest relative to one other, drifting
freely far from other matter without rotation, with $A$ equidistant from $B$
and $C$. The spaceships $B$ and $C$ are connected by a fragile thread, which
is just long enough to span the distance between them. On reception of a
signal from $A$, the spaceships $B$ and $C$ start their engines and
accelerate gently. Since $B$ and $C $ are assumed to be identical, with
identical acceleration programs, they will have the same velocity and so
remain separated by the same distance relative to $A$. When $B$ and $C$
reach a certain velocity, the thread breaks. The question is: why does the
thread break? Note that the thread would not break under similar assumptions in a Newtonian universe.

The relativistic kinematic explanation goes along the following lines:

Let $F1$ be the inertial frame in which the spaceships $A, B, C$ are \emph{
initially} at rest (and $A$ remains at rest). In $F1$, the distance between $
B$ and $C$, as the spaceships begin to move and continue moving, remains the
same as the initial resting distance. But the moving thread undergoes a
Lorentz contraction in the direction of its motion in $F1$. The explanation,
in $F1$, of why the thread breaks is just this: the thread breaks because it
is contracting, and this contraction is resisted by the thread being tied to 
$B$ and $C$, which maintain a distance apart greater than the contraction
requires. The thread will break when $B$ and $C$ reach a sufficiently high
velocity in $F1$ and the prevention of the Lorentz contraction produces
sufficient stress to break the thread.

Let $F2$ be the inertial frame in which $B$ and $C$ are \emph{finally} at
rest again, after their engines have been shut off. From the perspective of $
F2$, there is a different explanation for the thread breaking. In $F2$, the
two spaceships $B$ and $C$ are decelerating, and eventually come to rest.
However, they are not decelerating at the same rate (they would be if $B$
and $C$ were connected by a rigid rod). It is this difference in
deceleration that is responsible for the stress in the thread, which
eventually causes the thread to break.

To clarify further, one might consider two additional spaceships, $E$ and 
$F$, identical to $B$ and $C$, with identical acceleration programs, initially
at rest in $F1$ (before $B$ and $C$ start their engines), with $E$ adjacent
to $B$, and $F$ adjacent to $C$. Suppose $E$ and $F$ are connected rigidly,
so that $EF$ behaves like a rigid rod with the two spaceships as endpoints,
initially at rest in $F1$. Suppose also that $EF$ starts accelerating at the
same time as $B$ and $C$ in $F1$, and that the rod connecting $E$ and $F$ is
strong enough to remain rigid under the acceleration. Bell's
characterization of the setup requires that, in $F1$, the distance between $
B $ and $C$, as the spaceships begin to move and continue moving, remains
the same as the initial resting distance. So, in $F1$, this distance will
become greater than the distance between $E$ and $F$, once the spaceships
start moving, since $EF$ will suffer a Lorentz contraction in the direction
of its motion. In the explanation in frame $F1$, the thread breaks because
it is contracting by as much as $EF$ contracts. In the explanation in frame $
F2$, $B$ and $C$ are not decelerating at the same rate---rather, the
endpoints of $EF$ are decelerating at the same rate---and this difference in
deceleration, relative to the deceleration of $EF$, is responsible for the
stress in the thread, which eventually causes it to break.

The explanations are frame-dependent, insofar as they involve elements that
are frame-dependent notions in special relativity. However, the increasing stress in
the thread that causes it to break, and the fact that the thread breaks when the stress exceeds the tensile strength of the thread, are
frame-independent features common to all explanations. What Bell pointed out
was that one ought to be able to provide an explanation for the thread
breaking in terms of an explicit calculation of the forces involved, and the
tensile strength of the thread. He suggests that such a dynamical
explanation is a deeper or at least more informative explanation than the
kinematic explanation. Harvey Brown's book \emph{Physical Relativity} \cite
{BrownBook} develops this theme.

In Bell's spaceship example, the dynamical explanation for the thread
breaking in terms of forces, insofar as the forces are Lorentz covariant,
shows the possibility of a dynamics consistent with the kinematics of
special relativity. The only factor relevant to the thread breaking is the
Lorentz contraction, a feature of the geometry of Minkowski space-time which
is quite independent of the material constitution of the thread and the
nature of the specific interactions involved. Given Einstein's two
principles, there is no deeper explanation for the thread breaking than the
kinematic explanation provided by the structure of Minkowski space-time.
\footnote{Harvey Brown's book \cite{BrownBook} presents an extended argument for the
contrary view.} The demonstration that a dynamical explanation yields the
same result as the kinematic explanation sketched above amounts to a
consistency proof that a relativistic dynamics---a dynamics that conforms to
the structure of Minkowski space-time---is possible.

If we take special relativity as a template for the analysis of quantum
conditionalization and the associated measurement problem,\footnote{
See Brown and Timpson \cite{BrownTimpson2007} for a contrary view.} the
information-theoretic view of quantum probabilities as `uniquely given from
the start' by the structure of Hilbert space as a kinematic framework for an indeterministic physics is the
proposal to interpret Hilbert space as a constructive theory of
information-theoretic structure or probabilistic structure, part of the kinematics of a full
constructive theory of the constitution of matter, where the corresponding
principle theory includes information-theoretic constraints such as `no
signaling' and `no cloning.'\footnote{
While the `no cloning' principle demarcates classical from non-classical
theories, we require some further principle or principles to recover Hilbert
space and exclude `superquantum' theories for which the correlation of
entangled states violates the Tsirelson bound for quantum states, while
conforming to the `no signaling' constraint. See Barnum \emph{et al} \cite
{BBLW2006,BBLW2007}.} Lorentz contraction is a physically real phenomenon
explained relativistically as a kinematic effect of motion in a
non-Newtonian space-time structure. Analogously, the change arising in
quantum conditionalization that involves a real loss of information is
explained quantum mechanically as a kinematic effect of \emph{any} process
of gaining information of the relevant sort in the non-Boolean probability structure
of Hilbert space (irrespective of the dynamical processes involved in the
measurement process). Given `no cloning' as a fundamental principle, there
can be no deeper explanation for the information loss on conditionalization
than that provided by the structure of Hilbert space as a probability theory
or information theory. The definite occurrence of a particular event is constrained
by the kinematic probabilistic correlations encoded in the
structure of Hilbert space, and only by these correlations---it is otherwise
`free.'

The Born weights are probabilities in a purely formal sense unless they are related to
experience by some explicitly formulated principle. The cash value of the `transition probability' $|\langle e|f\rangle|^{2}$ is that $|\langle e|f\rangle|^{2}$ represents the probability,
in the state $| e \rangle$, of finding the outcome corresponding to the
state $| f \rangle$ in a measurement of an observable of which $| f \rangle$
is an eigenstate. But if quantum mechanics is more than an instrument for
predicting the probabilities of measurement outcomes, it must be possible,
in principle, to locate structures that represent macroscopic measuring
instruments and recording devices in Hilbert space, where the dynamical
behavior of such structures is consistent with the kinematic information-theoretic
(probabilistic) principles encoded in the structure of Hilbert space.

In special relativity one has a consistency proof that a dynamical account
of relativistic phenomena in terms of forces, like the breaking of the
thread in Bell's spaceship example, is consistent with the kinematic
account in terms of the structure of Minkowski space-time. An analogous
consistency proof for quantum mechanics would be a dynamical explanation for
the effective emergence of classicality, i.e., Booleanity, at the
macrolevel, because it is with respect to the Boolean algebra of the
macroworld that the Born weights of quantum mechanics have empirical cash
value.

In classical mechanics, taking a Laplacian view, one can consider the phase space of the entire universe, in principle. The classical state, represented by a point in phase space that evolves dynamically, defines a 2-valued homomorphism on the Boolean algebra of (Borel) subsets of phase space, distinguishing events that occur at a particular time from events that don't occur. In this sense, the classical state is the `truthmaker' for propositions about the occurrence or non-occurrenc of events, for all possible events.

Similarly,  in quantum mechanics one can consider the Hilbert space of the entire universe, in principle. This is a space of possible events, with a certain kinematic structure of probabilistic correlations between events, represented by the subspace structure or projective geometry of the space (different from the classical correlational structure represented by the subset structure of phase space). On the usual view, the quantum analogue of the classical state is a pure state represented by a ray or 1-dimensional subspace in Hilbert space. There is, of course, no 2-valued homomorphism on the non-Boolean algebra of subspaces of Hilbert space, but a pure state can be taken as distinguishing events that occur at a particular time (events represented by subspaces containing the state, and assigned probability 1 by the state)  from events that don't occur (events represented by subspaces orthogonal to the state, and assigned probability 0 by the state). This leaves all remaining events represented by subspaces that neither contain the state nor are orthogonal to the state (i.e., events assigned a probability $p$ by the state, where $0 < p < 1$) in limbo: neither occurring nor not occurring. The measurement problem then arises as the problem of accounting for the fact that an event that neither occurs not does not occur when the system is in a given quantum state can somehow occur when the system undergoes a measurement interaction with a macroscopic measurement device---giving measurement a very special status in the theory. Once the pure state is taken as the analogue of the classical state in this sense, the only way out of this problem, without adding structure to the theory, is the Everettian manoeuvre.

On the information-theoretic interpretation, the quantum state is a derived entity, a credence function that assigns probabilities to events in alternative Boolean algebras associated with the outcomes of alternative measurement outcomes. The measurement outcomes are macro-events in a particular Boolean algebra, and the macro-events that actually occur, corresponding to a particular measurement outcome, define a 2-valued homomorphism on this Boolean algebra. What has to be shown is how this occurrence of events in a particular Boolean algebra is consistent with the quantum dynamics.

It is a contingent feature of the dynamics of our particular quantum universe that events represented by subspaces of Hilbert space have a tensor product structure that reflects the division of the universe into microsystems (e.g., atomic nuclei), macrosystems (e.g., macroscopic measurement devices constructed from pieces of metal and other hardware), and the environment (e.g., air molecules, electromagnetic radiation). The Hamiltonians characterizing the interactions between microsystems and macrosystems, and the interactions between macrosystems and their environment, are such that a certain relative structural stability emerges at the macrolevel as the  tensor-product structure of events in Hilbert space evolves under the unitary dynamics. Symbolically, an event represented by a 1-dimensional projection operator like $P_{\ket{\psi}} = \ket{\psi}\bra{\psi}$, where
\begin{equation}
\ket{\psi} = \ket{s}\ket{M}\ket{\varepsilon}
\end{equation}
and $s, M, \varepsilon$ represent respectively microsystem, macrosystem, and environment, evolves under the dynamics to $P_{\ket{\psi(t)}}$, where
\begin{equation}
\ket{\psi(t)} = \sum_{k}c_{k}\ket{s_{k}}\ket{M_{k}}\ket{\varepsilon_{k}(t)}, \label{eq:correlation}
\end{equation}
and
\begin{equation}
\ket{\varepsilon_{k}(t)} = \sum_{\nu}\gamma_{\nu}e^{-ig_{k\nu}t}\ket{e_{\nu}}
\end{equation}
if the interaction Hamiltonian $H_{M\varepsilon} $ between a macrosystem and the environment takes the form
\begin{equation}
H_{M\varepsilon} = \sum_{k\gamma}g_{k\nu}\ket{M_{k}}\bra{M_{k}}\otimes\ket{e_{\nu}}\bra{e_{\nu}}
\end{equation}
with the $\ket{M_{k}}$ and the $\ket{e_{k}}$ orthogonal.  That is,  the `pointer' observable $\sum_{k}m_{k}\ket{M_{k}}\bra{M_{k}}$ commutes with $H_{M\varepsilon}$ and so  is a constant of the motion induced by the Hamiltonian $H_{M\varepsilon}$. 

Here $P_{\ket{M_{k}}}$ can be taken as representing, in principle, a configuration of the entire macro\-world, and $P_{\ket{s_{k}}}$ a configuration of all the micro-events correlated with macro-events. The dynamics preserves the correlation represented by the superposition $\sum_{k}c_{k}\ket{s_{k}}\ket{M_{k}}\ket{\varepsilon_{k}(t)}$ between micro-events, macro-events, and the environment for the macro-events $P_{\ket{M_{k}}}$, even for non\-ortho\-gonal $\ket{s_{k}}$ and $\ket{\varepsilon_{k}}$, but not for macro-events $P_{\ket{M'_{l}}}$ where the $\ket{M'_{l}}$  are linear superpositions of the $\ket{M_{k}}$. Since the tri-decomposition $\sum_{k}c_{k}\ket{s_{k}}\ket{M_{k}}\ket{\varepsilon_{k}(t)}$ is unique (unlike the bi\-ortho\-gonal Schmidt decomposition; see Elby and Bub \cite{ElbyBub}), a correlation of the form $\ket{s}\ket{M}\ket{\varepsilon}$ evolves to a linear superposition in which the macro-events $P_{\ket{M'_{l}}}$ become correlated with entangled system-environment events represented by subspaces (rays) spanned by linear superpositions of the form $\sum_{k}c_{k}d_{lk}\ket{s_{k}}\ket{\varepsilon_{k}(t)}$.    (See Zurek \cite[p. 052105-14]{Zurek2005}.) 

It is characteristic of the dynamics that correlations represented by (\ref{eq:correlation}) evolve to similar correlations (similar in the sense of preserving the micro-macro-environment division), and the macro-events represented by $P_{\ket{M_{k}}}$, at a sufficient level of coarse-graining, can be associated with structures at the macrolevel---the familiar macro-objects of our experience---that remain relatively stable under the dynamical evolution. So a Boolean algebra $\hil{B_{M}}$ of macro-events $P_{\ket{M_{k}}}$ correlated with micro-events $P_{\ket{s_{k}}}$ in (\ref{eq:correlation}) is emergent in the dynamics. Note that the emergent Boolean algebra is not the same Boolean algebra from moment to moment, because the correlation between micro-events and macro-events changes under the dynamical evolution induced by the micro-macro interaction (e.g., corresponding to different measurement interactions).  What remains relatively stable under the dynamical evolution are the \textit{macrosystems} associated with macro-events in correlations of the form (\ref{eq:correlation}), even under a certain vagueness in the  coarse-graining associated with these macro-events: macrosystems like grains of sand, tables and chairs, macroscopic measurement devices, cats and people, galaxies, etc.

It is further characteristic of the dynamics that  the environmental events represented by $P_{\ket{\varepsilon_{k}(t)}}$
very rapidly approach orthogonality, i.e., the `decoherence factor'
\begin{equation}
\zeta_{kk'} = \langle\varepsilon_{k}|\varepsilon_{k'}\rangle = \sum_{\nu}|\gamma_{\nu}|^{2}e^{i(g_{k'\nu}-g_{k\nu})t}
\end{equation}
becomes negligibly small almost instantaneously.  When  the environmental events $P_{\ket{\varepsilon_{k}(t)}}$ correlated with the macro-events $P_{\ket{M_{k}}}$ are effectively orthogonal,  the reduced density operator is effectively diagonal in the `pointer' basis $\ket{M_{k}}$ and there is effectively no interference between elements of the emergent Boolean algebra $\hil{B_{M}}$. That is, the conditional probabilities of events associated with a subsequent emergent Boolean algebra (a subsequent measurement) are additive on $\hil{B_{M}}$.  (See Zurek \cite[p. 052105-14]{Zurek2005}, \cite{Zurek2003a}.)

The Born probabilities are probabilities of events in the emergent Boolean algebra, i.e.,  the Born probabilities are probabilities of `pointer' positions, the coarse-grained basis selected by the dynamics. Applying quantum mechanics kinematically, say in assigning probabilities to the possible outcomes of a measurement of some observable of a microsystem, we consider the Hilbert space of the relevant degrees of freedom of the microsystem and treat the measuring instrument as simply selecting a Boolean subalgebra in the non-Boolean event space of the microsystem to which the Born probabilities apply. In principle, we can include the measuring instrument in a dynamical analysis of the measurement process, but such a dynamical analysis---even though complete in terms of the quantum dynamics---does not provide a dynamical explanation of how individual outcomes come about. In such a dynamical analysis, the Born probabilities are probabilities of the occurrence of events in an emergent Boolean algebra. The information loss on conditionalization relative to classical conditionalization is a kinematic feature of the the structure of quantum events, not accounted for by the unitary quantum dynamics, which conforms to the kinematic structure. This is analogous to the situation in special
relativity, where Lorentz contraction is a kinematic effect of relative
motion that is \emph{consistent} with a dynamical account in terms of
Lorentz covariant forces, but is not explained in Einstein's theory---by
contrast with Lorentz's theory---as a dynamical effect in a Newtonian
space-time structure, in which this sort of contraction does not arise as a
purely kinematic effect. That is, the dynamical explanation of Lorentz
contraction in special relativity involves forces that are Lorentz
covariant---in effect, the dynamics is assumed to have symmetries that
respect Lorentz contraction as a kinematic effect of relative motion.
In quantum mechanics, the possibility of a dynamical analysis of the measurement process conforming to the kinematic structure of Hilbert space provides a consistency proof that the familiar objects of our macroworld behave dynamically in accordance with the kinematic probabilistic constraints on correlations between events.

A physical theory of an indeterministic universe is
primarily a theory of probability (or information). Probabilities are defined over an event structure, which in the quantum case is a family of Boolean algebras forming a particular sort of non-Boolean algebra. On the information-theoretic interpretation, no assumption is made about the fundamental `stuff' of the universe. So, one might ask, what do tigers supervene on?\footnote{We thank Allen Stairs for raising the realism question in this form.} In the case of Bohm's theory or the GRW theory, the answer is relatively straightforward: tigers supervene on particle configurations in the case of Bohm's theory, and on mass density or `flashes' in the case of the GRW theory, depending on whether one adopts the GRWm version or the GRWf version. In the Everett interpretation, tigers  supervene on features of the quantum state, which describes an ontological entity. In the case of the information-theoretic interpretation, the `supervenience base' is provided by the dynamical analysis: tigers supervene on events defining a 2-valued homomorphism in the emergent Boolean algebra. 

It might be supposed that this involves a contradiction. What is contradictory is to suppose that a correlational event represented by $P_{\ket{\psi(t)}}$ actually occurs, where
$\ket{\psi(t)}$ is a linear superposition $\sum_{k}c_{k}\ket{s_{k}}\ket{M_{k}}\ket{\varepsilon_{k}(t)}$, as well as an event represented by $P_{\ket{s_{k}}\ket{M_{k}}\ket{\varepsilon_{k}(t)}}$ for some specific $k$. We do not suppose this. On the information-theoretic interpretation we propose, there is a kinematic structure of possible correlations (but no particular atomic correlational event is selected as the `state' in a sense analogous to the pure classical state), and a particular dynamics that preserves certain sorts of correlations, i.e., correlational events of the sort represented by $P_{\ket{\psi(t)}}$ with $\ket{\psi(t)} = \sum_{k}c_{k}\ket{s_{k}}\ket{M_{k}}\ket{\varepsilon_{k}(t)}$ evolve to correlational events of the same form. What can be identified as emergent in this dynamics is an effectively classical probability space: a Boolean algebra with atomic correlational events of the sort represented by orthogonal 1-dimensional subspaces $P_{\ket{s_{k}}\ket{M_{k}}}$, where the probabilities are generated by the reduced density operator obtained by tracing over the environment, when the correlated environmental events are effectively orthogonal.

The dynamics does not describe the (deterministic or stochastic) evolution of the 2-valued homomorphism on which tigers supervene to a new 2-valued homomorphism (as in the evolution of a classical state). Rather, the dynamics leads to the relative stability of certain event structures at the macrolevel associated with the familiar macrosystems of our experience, and to an emergent effectively classical probability space whose atomic events are correlations between events associated with these macrosystems and micro-events. 

It is part of the information-theoretic interpretation that events defining a 2-valued homomorphism on the Boolean algebra of this classical probability space actually occur  with the emergence of the Boolean algebra at the macrolevel. This selection of actually occurring events is only in conflict with the quantum pure state if  the quantum pure state is assumed to have an ontological significance analogous to the ontological significance of the classical pure state as the `truthmaker' for propositions about the occurrence and non-occurrence of events, and if the quantum pure state evolves unitarily---in particular, if it is assumed  that the quantum pure state partitions all events into events that actually occur, events that do not occur, and events that neither occur nor do not occur, as on the usual interpretation. We argued that this assumption is one of the dogmas about quantum mechanics that should be rejected.  Rather, we take the quantum state, pure or mixed, to represent a credence function: the credence function of a rational agent (an information-gathering entity `in' the emergent Boolean algebra) who is updating probabilities on the basis of events that occur in the emergent Boolean algebra. 

\section{Concluding Remarks}

We have argued that the `big' measurement problem is like the problem for Newtonian physics raised by relativistic effects such as length contraction and time dilation, and that the solution to both problems involves the recognition of a fundamental change in the underlying \emph{kinematics} of our physics, represented by the transition from a Newtonian space-time to Minkowski space-time in the case of special relativity, and from the set-theoretic structure of classical phase space to the subspace structure of Hilbert space in the case of quantum mechanics. So the two assumptions, about the ontological significance of the quantum state and about the dynamical account of how measurement outcomes come about, should be rejected as unwarranted dogmas about quantum mechanics. 

The solutions to the `big' measurement problem provided by Bohm's theory and the GRW theory are dynamical and involve adding structure to quantum mechanics. There is a sense in which adding structure to the theory to solve the measurement problem dynamically---insofar as the problem arises from a failure to recognize the significance of Hilbert space as  the kinematic framework   for the physics of an indeterministic universe---is like Lorentz's attempt to explain relativistic length contraction dynamically, taking the Newtonian space-time structure as the underlying kinematics and   invoking      the ether as an additional structure for the propagation of electromagnetic effects.   In this sense, Bohm's theory and the GRW theory  are `Lorentzian' interpretations of quantum mechanics.  

The Everettian rejects the legitimacy of the problem by simply denying that measurements have definite outcomes, i.e., by denying that  the    pure states in a superposition describe alternative event complexes, only one of which actually occurs. This requires showing that a \emph{particular} decomposition of the quantum state corresponding to our experience has a preferred significance, and that weights can be assigned to the individual terms in the preferred superposition that have the significance of probabilities, even though no one definite event complex is selected as actually occurring in contrast to the other event complexes in the superposition. The Everettian's solution to \emph{this} problem is dynamical. So the Everettian, too, sees the underlying problem as dynamical. 

We reject the legitimacy of the `big' measurement problem on the basis of an information-theoretic interpretation of quantum mechanics, in terms of which the problem arises from the failure to see the significance of Hilbert space as the kinematic framework for an indeterministic physics. The dynamical analysis we provide is a solution to a consistency problem:  the `small' measurement problem. The analysis shows that a quantum dynamics, consistent with the kinematics of Hilbert space, suffices to underwrite the emergence of a classical probability space for the familiar macro-events of our experience, with the Born probabilities for macro-events associated with measurement outcomes derived from the quantum state as a credence function. The explanation for such nonclassical effects as the loss of information on conditionalization is not provided by the dynamics, but by the kinematics, and given `no cloning' as a fundamental principle, there can be no deeper explanation. In particular, there is no dynamical explanation for the definite occurrence of a particular measurement outcome, as opposed to other possible measurement outcomes in a quantum measurement process---the occurrence is constrained by the kinematic probabilistic correlations encoded in the projective geometry of Hilbert space, and only by these correlations. 

\section{Acknowledgements}

Jeffrey Bub acknowledges support from the National Science Foundation under Grant No. 0522398. Itamar Pitowsky's research is supported by the Israel Science Foundation, Grant 744/07.

\section{Appendix: The Information Loss Theorem}

We show that\emph{ it follows from the `no cloning' principle} that information
cannot be extracted from a nonclassical source without changing the source
irreversibly. (We prove this theorem for quantum information sources, but
note that the proof does not depend on specific features of the Hilbert
space formalism.)

We assume:

\begin{itemize}
\item[(1)] The `no cloning' principle: there is no universal cloning machine.

\item[(2)] Every (quantum) state $\rho$ is specified by the probabilities of
the measurement outcomes of a finite, informationally complete (or
`fiducial') set of observables.
\end{itemize}

Assumption (2) holds for a large class of theories, including quantum and
classical theories. Note that an informationally complete set is not unique.
For example, in the case of a qubit, the probabilities for spin `up' and
spin `down' in three orthogonal directions suffice to define a direction on
the Bloch sphere and hence to determine the state, so the spin observables $
\sigma_{x}, \sigma_{y}, \sigma_{z}$ form an informationally complete set.
(For a classical system or a classical information source, an
informationally complete set is given by of a single observable, with $n$
possible outcomes, for some $n$.)

Let $\mbox{$\mathcal{F}$} = \{A,B,C,\ldots\}$ be an informationally complete
set of observables represented by a finite set of Hermitian operators on an $
n$-dimensional Hilbert space $\mbox{$\mathcal{H}$}_{n}$. A quantum state $
\rho$ assigns a probability distribution to every outcome of any measurement
of an obervable in $\mbox{$\mathcal{F}$}$. Measuring $A$ yields one of the
outcomes $a_{1}, a_{2}, \ldots$ with a probability distribution $
P_{\rho}(a_{1}|A), P_{\rho}(a_{2}|A), \ldots$. Similarly, measuring $B$
yields one of the outcomes $b_{1}, b_{2}, \ldots$ with a probability
distribution $P_{\rho}(b_{1}|A), P_{\rho}(b_{2}|A), \ldots$, and so on. If $
\mbox{$\mathcal{F}$}$ is informationally complete, the finite set of
probabilities completely characterizes $\rho$ as the state on $
\mbox{$\mathcal{H}$}$.

Assuming that all measurement outcomes are independent and ignoring any
algebraic relations among elements of $\mbox{$\mathcal{F}$}$, a classical
probability measure on a classical (Kolmogorov) probability space can be
constructed from these probabilities: 
\begin{equation}
P_{\rho}(a,b,\ldots|A,B,\ldots) = P_{\rho}(a|A)P_{\rho}(b|B)\ldots
\end{equation}
(cf. the `trivial' hidden variable construction of Kochen and Specker in 
\cite{KochenSpecker}). Note that the probability space is finite since $
\mbox{$\mathcal{F}$}$ is finite and $\mbox{$\mathrm{dim}$}{
\mbox{$\mathcal{H}$}} < \infty$. (The number of atoms in the probability
space is at most $\mbox{$\mathrm{dim}$}{\mbox{$\mathcal{H}$}}^{|
\mbox{$\mathcal{F}$}|}$.) The quantum state $\rho$ can be reconstructed from 
$P_{\rho}$ (given as a classical information source, or rationally
approximated in the memory of a classical computer).

We now prove:

\begin{infolossthm}
Assumptions (1) and (2) entail that extracting information from a quantum information source given by a quantum state $\rho$, sufficient to generate the probabilities of an informationally
complete set of observables, is either impossible or necessarily changes the
state $\rho$ irreversibly, i.e., there must be information loss in the
extraction of such information.
\end{infolossthm}

\begin{proof}
Step 1: begin with a quantum source in the state $\rho$ and measure $A,B,\ldots$ sufficiently many times to generate the classical probability measure $P_{\rho}$, to as good an approximation as required, without  destroying $\rho$. Step 2: from $P_{\rho}$ construct a copy  of $\rho$. 
\begin{equation}
\rho \stackrel{\mbox{\scriptsize measure}}{\longrightarrow} P_{\rho} \stackrel{\mbox{\scriptsize prepare}}{\longrightarrow} \rho 
\end{equation}

This procedure defines a universal cloning machine, which we assume to be impossible. Since Step 2 is possible by assumption (2), the `no cloning' assumption (1) entails that Step 1 is blocked. 

We are left with two options: either there is no way to generate $P_{\rho}$ from $\rho$ (which is the case in quantum mechanics if we have only one copy of $\rho$, or too few copies of $\rho$), or else, if we can generate $P_{\rho}$ from $\rho$, assumption (1) entails that the original `blueprint' $\rho$ must have been changed irreversibly by the process of extracting the information to generate $P_{\rho}$ (if not, the change in $\rho$ could be reversed dynamically and cloning would be possible):
\begin{equation}
\xcancel{\rho} \stackrel{\mbox{\scriptsize measure}}{\longrightarrow} P_{\rho} \stackrel{\mbox{\scriptsize prepare}}{\longrightarrow} \rho \label{eqn:dist}
\end{equation}
\end{proof}

Since we can prepare multiple copies of the state $\rho$ from $P_{\rho}$,
one might think that even if the original state is destroyed in generating $
P_{\rho}$, we still end up with multiple copies of $\rho$: 
\begin{eqnarray}
& \xcancel{\rho} \overset{\mbox{\scriptsize measure}}{\longrightarrow} &
P_{\rho} \overset{\mbox{\scriptsize prepare}}{\longrightarrow} \rho  \notag
\\
& & \hspace{.2in} \overset{\mbox{\scriptsize prepare}}{\searrow} \rho  \notag
\\
& & \hspace{.35in} \vdots
\end{eqnarray}
But note that to generate $P_{\rho}$, we need to begin with multiple copies
of $\rho$, i.e., we need to begin with a state $\rho \otimes \rho \cdots$,
so what we really have is: 
\begin{equation}
\xcancel{\rho} \otimes \xcancel{\rho} \cdots \overset{
\mbox{\scriptsize
measure}}{\longrightarrow} P_{\rho} \overset{\mbox{\scriptsize prepare}}{
\longrightarrow} \rho \otimes \rho \cdots
\end{equation}
which simply re-states (\ref{eqn:dist}).

\begin{corollary}
No complete dynamical (i.e., unitary) account of the state transition in a
measurement process is possible in quantum mechanics, in general.
\end{corollary}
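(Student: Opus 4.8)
The plan is to read the corollary off the Information Loss Theorem, using the single additional fact that unitary evolution is invertible, hence reversible. First I would fix a measurement process $\mathcal{M}$ of the kind the theorem is about: a process that, applied to sufficiently many copies of a source in the state $\rho$, records the outcome statistics of the fiducial set $\mathcal{F} = \{A,B,C,\ldots\}$ and thereby generates the classical measure $P_{\rho}$ to the required approximation. By the theorem, either no such $\mathcal{M}$ exists (the case of too few copies of $\rho$), or else every such $\mathcal{M}$ changes $\rho$ irreversibly, on pain of making the cloning diagram $\rho \to P_{\rho} \to \rho$ realizable. So I may assume we are in the second case, in which $\mathcal{M}$ is available but its net effect on the source is an irreversible state transition.

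Next I would argue that an irreversible state transition admits no complete unitary account. Suppose, for contradiction, that the whole of $\mathcal{M}$---including the step in which each run registers a definite outcome, which is what actually populates the record from which $P_{\rho}$ is computed---were described by a unitary evolution $U$ on the closed system comprising source, apparatus, and environment. Then $U^{-1}$ undoes it, so the change in the source could be reversed dynamically; but then, having first generated $P_{\rho}$ and then reversed $\mathcal{M}$, we would recover $\rho$ while retaining $P_{\rho}$, and preparing a further copy of $\rho$ from $P_{\rho}$ via assumption (2) yields a universal cloning machine, contradicting assumption (1). Hence $\mathcal{M}$, qua process that registers definite outcomes, has no complete unitary account; the non-unitary, information-losing part is exactly the von Neumann-L\"{u}ders transition $\rho \to \rho_{a}$ discussed above. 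Since such measurement processes are perfectly good quantum measurements, the state transition in a measurement process is, in general, not captured by the unitary dynamics of the theory.

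The step I expect to be the main obstacle---really the only delicate point---is keeping straight what ``complete dynamical account'' is being denied. The unitary pre-measurement correlation $\ket{\psi}\ket{M_{0}} \to \sum_{k} c_{k}\ket{a_{k}}\ket{M_{k}}$ is itself unitary and reversible, and extends to source, apparatus, and environment; so the corollary is \emph{not} the false claim that no part of a measurement interaction is unitary. What has no unitary account is the transition in which one outcome, as opposed to the others, is realized---and this is precisely the transition that must occur for the classical record generating $P_{\rho}$ to exist at all. The argument therefore needs the observation, implicit in the theorem's proof, that it is the extraction of a definite answer, not the mere entangling of system with pointer, that is essential to producing $P_{\rho}$; once that is granted, reversibility of unitary maps together with ``no cloning'' does the rest. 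One should also flag the scope of ``in general'': for a single run, or for a measurement of a single observable with too few copies, there may be nothing that generates $P_{\rho}$, so the corollary asserts only that there exist measurement processes---those adequate to reconstruct the state---whose state transition provably cannot be unitary, which already suffices to block any uniform unitary account of measurement.
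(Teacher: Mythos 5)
Your argument is essentially a re-run of the Information Loss Theorem's cloning argument, applied to a tomographically complete procedure $\mathcal{M}$ that generates $P_{\rho}$, and it therefore only yields the conclusion you yourself state at the end: that \emph{there exist} measurement processes---those adequate to reconstruct the state---whose net effect on the source cannot be unitary. That is weaker than what the corollary is meant to assert, and it is here that you diverge from the paper. The paper's proof is a single move you do not make: \emph{any} measurement (of any single observable) can be taken as a member of an informationally complete set, and so the theorem's irreversibility conclusion transfers to each individual measurement---if the state change of the measured system in an ordinary measurement of $A$ admitted a complete unitary (hence reversible) account, one could measure $A$, undo the change, measure $B$, undo it, and so on through a fiducial set, generating $P_{\rho}$ with no net change to the copies and thereby building the forbidden cloning machine. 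Without this embedding step, your proposal leaves open a complete unitary account of ordinary single-observable measurements, i.e., exactly the case ("the state transition in a measurement process") that the corollary, as the paper proves it, is designed to exclude: the paper concludes that \emph{any} measurement must lead to an irreversible, hence non-unitary, change in the quantum state of the measured system, not merely that full state tomography must.

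Two smaller remarks. First, your step "recover $\rho$ while retaining $P_{\rho}$" needs the classical record to be copied out before the reversal is applied (unproblematic for classical data, and at the same level of rigor as the parenthetical "if not, the change in $\rho$ could be reversed dynamically and cloning would be possible" in the theorem's proof), so that is not the real issue. Second, your clarification that the corollary does not deny the unitarity of the pre-measurement correlation $\ket{\psi}\ket{M_{0}} \rightarrow \sum_{k} c_{k}\ket{a_{k}}\ket{M_{k}}$ is consonant with the paper's reading: what is denied is a complete unitary account of the transition of the measured system's state, the von Neumann-L\"{u}ders updating. But to get that denial for measurements generally, you need the "any measurement can be part of an informationally complete set" observation; supply it and your argument collapses to the paper's one-line proof.
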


\begin{proof}
Any measurement can be part of an informationally complete set, so any measurement must lead to an irreversible (hence non-unitary) change in the quantum state of the measured system.
\end{proof}

We conclude---essentially from the `no cloning ' principle---that there can be no measurement device that functions
dynamically in such a way as to identify with certainty the output of an
arbitrary quantum information source without altering the source
irreversibly or `uncontrollably,' to use Bohr's term---no device can
distinguish a given output from every other possible output by undergoing a
dynamical (unitary) transformation that results in a state that represents a
distinguishable record of the output, without an irreversible transformation
of the source.

\bibliographystyle{plain}
\bibliography{Jeff}

\end{document}